\def\BibTeX{{\rm B\kern-.05em{\sc i\kern-.025em b}\kern-.08em
    T\kern-.1667em\lower.7ex\hbox{E}\kern-.125emX}}
\newtheorem{lem}{Lemma}
\newtheorem{thm}{Theorem}
\begin{document}

\title{Stochastic Geometry Analysis of Spectrum Sharing Among Multiple Seller and Buyer Mobile Operators}

\author{\IEEEauthorblockN{Elaheh~Ataeebojd$^{\dag}$, Mehdi Rasti$^{\dag}$, Hossein Pedram$^{\dag}$, and Pedro H. J. Nardelli$^{\S}$
	\thanks{This work is supported by the Academy of Finland: (a) ee-IoT n.319009, (b) EnergyNet n.321265/n.328869, and (c) FIREMAN n.326270/CHISTERA-17-BDSI-003; and by JAES Foundation via STREAM project.}
}
$^{\dag}$Department of Computer Engineering, Amirkabir University of Technology, Tehran, Iran\\
$^{\S}$ Lappeenranta-Lahti University of Technology, Lappeenranta, Finland\\
}
\maketitle

\begin{abstract}
Sharing the licensed frequency spectrum among multiple mobile network operators (MNOs) is a promising approach to improve licensed spectrum utilization. In this paper, we model and analyze a non-orthogonal spectrum sharing system consisting of multiple seller and multiple buyer MNOs where buyer MNOs lease several licensed sub-bands from different seller MNOs. All base stations (BSs) owned by a buyer MNO can also utilize various licensed sub-bands simultaneously, which are also used by other buyer MNOs. To reduce the interference that a buyer MNO imposes on one seller MNO sharing its licensed sub-band, this buyer MNO has a limitation on the maximum interference caused to the corresponding seller MNO's users. We assume each MNO owns its BSs and users whose locations are modeled as two independent homogeneous Poisson point processes. Applying stochastic geometry, we derive expressions for the downlink signal-to-interference-plus-noise ratio coverage probability and the average rate of both seller and buyer networks. The numerical results validate our analysis with simulation and illustrate the effect of the maximum interference threshold on the total sum-rate of the network.
\end{abstract}

\begin{IEEEkeywords}
average rate, coverage probability, mobile network operator, non-orthogonal spectrum sharing, stochastic geometry
\end{IEEEkeywords}

\section{Introduction}
Due to the scarcity of frequency spectrum and increasing demand for various mobile services \cite{Cisco}, mobile network operators (MNOs) encounter many challenges in providing efficient ways to use resources more efficiently to serve their users in 5G and beyond. In recent studies, spectrum sharing is considered as a robust solution in 5G and beyond to utilize the existing resources more efficiently \cite{Gui2020}. In spectrum sharing, different MNOs, named buyer MNOs, are allowed to use (or lease) the frequency sub-bands of other MNOs, called seller MNOs, to afford mobile data services to their users \cite{Umar2017}.

In \cite{Joshi2017} -- \cite{Asaduzzaman2018}, various aspects and performance of spectrum sharing among MNOs are studied in a cellular setting. These studies consider a deterministic model in which the location of base stations (BSs) and user equipment (UE) are known, resulting in unwieldy problem formulation when the number of BSs and UEs is large. Hence, a stochastic geometry model is adopted as a tractable approach to model and analyze various cellular wireless systems owned by different MNOs as the case in \cite{Andrews2016} -- \cite{Gupta2016} instead of a deterministic model. For instance, in \cite{Gupta2016}, the performance of a system with a single buyer and a single seller MNO is investigated, and performance metrics such as signal-to-interference-plus-noise ratio (SINR) coverage probability and per-user average rate are derived. Besides, in \cite{Gupta2016}, a power control strategy is adopted based on the distance between each buyer BS and the nearest seller UE to it. In \cite{Sanguanpuak2017Conf}, a stochastic geometry framework is presented to analyze a non-orthogonal spectrum sharing system in an indoor small cell network considering a single buyer MNO and multiple seller MNOs. Spectrum sharing among multiple MNOs is also studied in \cite{Andrews2016}, where each MNO owns its BSs and UEs distributed based on independent Poisson Point Process (PPP). 

A significant limitation of prior work \cite{Joshi2017} -- \cite{Sanguanpuak2017Conf} is that in sharing the frequency spectrum between different MNOs, there is no power control over the transmit power of the BSs, especially the BSs of buyer MNOs. Consequently, the interference imposed on UEs is unpredictable, leading to a reduction in the overall performance of the networks. The other limitation is the assumption of a single buyer MNO in \cite{Sanguanpuak2017Conf} -- \cite{Gupta2016}. To study the impact of sharing sub-bands of one seller MNO with several buyer MNOs and the effect of simultaneous leasing sub-bands from multiple seller MNOs by each buyer MNO (opposing with \cite{Andrews2016}), a more general framework containing multiple seller and multiple buyer MNOs is needed.

In this paper, to address the issues mentioned above, we consider a non-orthogonal spectrum sharing system where multiple buyer MNOs lease several licensed sub-bands from multiple seller MNOs. In this scenario, all BSs of each buyer MNO can utilize several sub-bands. Also, each seller MNO allows multiple buyer MNOs to utilize each of its sub-band simultaneously. Moreover, inspired by \cite{Yan2018}, we introduce a distribution function for the downlink transmit power of buyer MNOs so that the interference imposed on seller MNOs' UEs does not violate a tolerable interference threshold. Using a stochastic geometry approach, we analyze the SINR coverage probability and the average rate considering a large-scale cellular network. Our contributions in this paper are summarized as follows:
\begin{itemize}
	\item { 
		A more general framework is proposed containing multiple seller and multiple buyer MNOs to study the impact of sharing sub-bands between seller and buyer MNOs.
	}
	\item {
		Employing stochastic geometry, we analytically obtain the SINR coverage probability and the average rate as performance metrics under random parameters such as channel gains, distance of UEs from BSs, as well as the transmit power of buyer BSs. To do this, we introduce a power control strategy to justify the transmit power of buyer MNOs' BSs on the licensed sub-bands of the seller MNOs.
	}
	\item {
		Numerical results validate our analytical results with simulation and show the performance of our adopted power control strategy.
	}
\end{itemize}

The rest of this paper is organized as follows: The system model and our assumptions are explained in Section \ref{Sys}. Employing stochastic geometry, we analyze the probability of SINR coverage and the average rate for both seller and buyer MNOs in Section \ref{Ana}. In Section \ref{Res}, the numerical results are presented. Finally, the conclusion is summarized in Section \ref{Con}.
\vspace{-0.7 em}
\section{System Model and Assumptions}\label{Sys}
\subsection{System Model}
Consider a downlink cellular wireless network with a set of seller MNOs $\mathcal{S}$ lease their frequency sub-bands to a set of buyer MNOs $\mathcal{B}$ (see Fig. \ref{fig:Spect}). The set of all MNOs is denoted by $\mathcal{O}=\mathcal{S}\cup\mathcal{B}$. We assume that seller MNO $s\in\mathcal{S}$ owns a license for an orthogonal spectrum divided into $L_{s}$ licensed sub-bands. Moreover, the set of licensed sub-bands for seller MNO $s\in\mathcal{S}$ is represented by $\mathcal{L}_{s}$ $(|\mathcal{L}_{s}| = L_{s})$. Additionally, there is no interference among the licensed sub-bands of different seller MNOs. Each MNO owns an independent network consisting of its own BSs and UEs. The location of BSs and UEs for MNO $k \in\mathcal{O}$ are modeled by two independent PPPs $\Phi_{k}$ and $\Psi_{k}$ with intensity $\lambda_{k}$ and $\mu_{k}$, respectively. Also, the set of UEs and BSs owned by MNO $k\in\mathcal{O}$ are given by $\mathcal{U}_{k}$ and $\mathcal{F}_{k}$, respectively. Each BS of seller MNOs transmits with a fixed power denoted by $P^{\text{S}}$ on their sub-bands. Nonetheless, each BS of buyer MNOs is limited to transmit at a certain power $P^{\text{B}}_{l_{s}}$ on sub-band $l_{s}\in\mathcal{L}_{s}$ of seller MNO $s$ to ensure that the interference imposed on each UE of seller MNO $s$ is below the maximum interference threshold $\zeta_{s}$. Furthermore, it is assumed that a UE associated with a BS of its MNO, providing the maximum average received power. We perform the analysis based on a typical seller UE (SUE) of a seller MNO and a typical buyer UE (BUE) of a buyer MNO for downlink cellular communication. We use a general power-law path loss model with path loss exponent $\alpha > 2$. Besides, we assume all channel gains are modeled by Rayleigh fading, which means that all fading variables are exponentially distributed with unit mean.
\begin{figure}[t!]
	\center{\includegraphics[width=9 cm,height=6 cm]
		{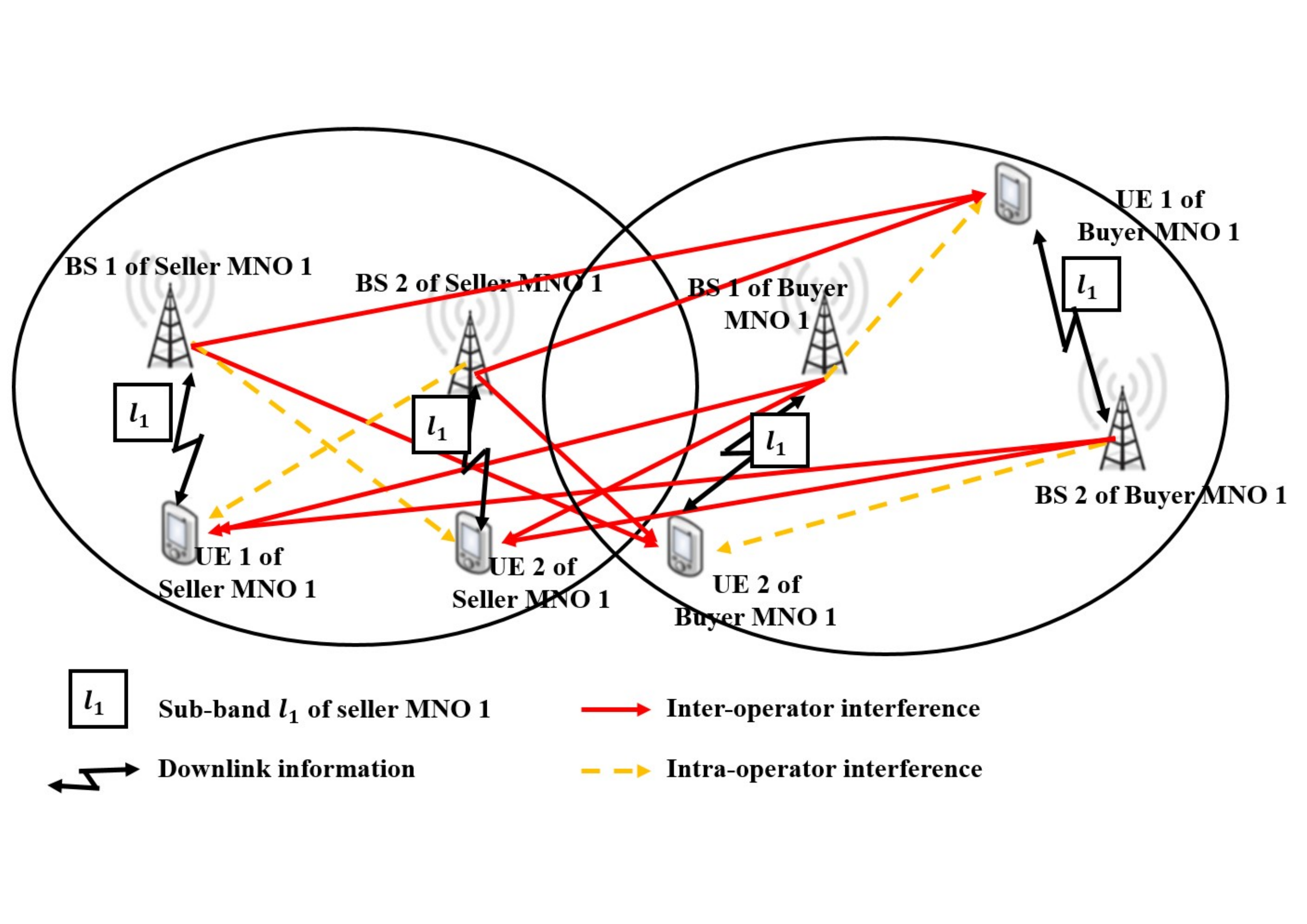}
		\caption{A cellular network with one seller and one buyer MNO.\label{fig:Spect}}
	}
\end{figure}
\vspace{-7pt}
\subsection{License Sharing Group}
Seller MNO $s$ can share each of its licensed sub-bands, e.g., $l_{s}\in\mathcal{L}_{s}$, with multiple buyer MNOs at the same time (see Fig. \ref{fig:Spect}). Let us denote all MNOs sharing licensed sub-band $l_{s}\in\mathcal{L}_{s}$ by $Q_{l_{s}}$, where seller MNO $s$ is surely belong to $Q_{l_{s}}$. Accordingly, in each sub-band $l_{s}\in\mathcal{L}_{s}$, a typical BUE of buyer MNO $b$ experiences interference from the BSs of its MNO, the BSs of seller MNO $s$, and the BSs of the other buyer MNOs sharing sub-band $l_{s}\in\mathcal{L}_{s}$ simultaneously. Similarly, in each sub-band $l_{s}\in\mathcal{L}_{s}$, a typical SUE of seller MNO $s$ will experience interference from the BSs of its MNO and the BSs of buyer MNOs sharing sub-band $l_{s}\in\mathcal{L}_{s}$ at the same time. In both cases, the set of interfering BSs is owned by MNOs, members of the sharing group $Q_{l_{s}}$.
\subsection{SINR Model}
We assume that a typical SUE of seller MNO $s$ is located at the origin. Thus, the downlink received SINR at the typical SUE of seller MNO $s\in\mathcal{S}$ on sub-band $l_{s}\in\mathcal{L}_{s}$ is determined by
\begin{equation}\label{SINR_n}
\gamma^{\text{S}}_{s,l_{s}}=\frac{P^{\text{S}}h_{0,l_{s}}^{\text{S}}{x_{0}^{\text{S}}}^{-\alpha}}{I_{s,l_{s}}^{\text{S}}+\sigma^{2}},
\end{equation}
where $x_{0}^{\text{S}}$ is the distance between this typical SUE and its associated BS, $h_{0,l_{s}}^{\text{S}}$ represents the channel gain between this typical SUE and its associated BS on sub-band $l_{s}\in\mathcal{L}_{s}$, ${\sigma}^2$ denotes the noise power, and $I_{s,l_{s}}^{\text{S}}$ is the interference experienced by this typical SUE of seller MNO $s$ on sub-band $l_{s}\in\mathcal{L}_{s}$ from all interfering BSs. It is calculated as $I_{s,l_{s}}^{\text{S}}=\sum_{f\in\mathcal{F}_{s}\setminus\{0\}} P^{\text{S}} h_{f,l_{s}}^{\text{S}} \allowbreak {x_{f}^{\text{S}}}^{-\alpha} + \sum_{b \in Q_{l_{s}}\setminus{\{s\}}} \sum_{f\in\mathcal{F}_{b}} P^{\text{B}}_{l_{s}} h_{f,l_{s}}^{\text{B}} {x_{f}^{\text{B}}}^{-\alpha}$, where $h_{f,l_{s}}^{\text{S}}$ is the channel gain between this typical SUE and BS $f$ of the corresponding seller MNO on sub-band $l_{s}\in\mathcal{L}_{s}$, $h_{f,l_{s}}^{\text{B}}$ is the channel gain between this typical SUE and BS $f$ of any buyer MNO on sub-band $l_{s}\in\mathcal{L}_{s}$, $x_{f}^{\text{S}}$ is the distance between this typical SUE and BS $f$ of the corresponding seller MNO, and $x_{f}^{\text{B}}$ is the distance between this typical SUE and BS $f$ of any buyer MNO.

Likewise, we assume that a typical BUE of buyer MNO $b$ is located at the origin. Therefore, the downlink received SINR at the typical BUE of buyer MNO $b$ on shared sub-band $l_{s}\in \mathcal{L}_{s}$ is expressed as
\begin{equation} \label{SINR_m}
\gamma^{\text{B}}_{b,l_{s}}=\frac{P^{\text{B}}_{l_{s}}h_{0,l_{s}}^{\text{B}}{x_{0}^{\text{B}}}^{-\alpha}}{I_{b,l_{s}}^{\text{B}}+\sigma^{2}},
\end{equation}
where $x_{0}^{\text{B}}$ represents the distance between this typical BUE and its associated BS, $h_{0,l_{s}}^{\text{B}}$ denotes the channel gain between this typical BUE and its associated BS on sub-band $l_{s}\in\mathcal{L}_{s}$, and $I_{b,l_{s}}^{\text{B}}$ is the interference experienced by this typical BUE of buyer MNO $b$ on sub-band $l_{s}\in\mathcal{L}_{s}$ from all interfering BSs. It is calculated as $I_{b,l_{s}}^{\text{B}} = \sum_{f\in\mathcal{F}_{b}\setminus\{0\}} P^{\text{B}}_{l_{s}} h_{f,l_{s}}^{\text{B}} {x_{f}^{\text{B}}}^{-\alpha} + \sum_{f \in \mathcal{F}_{s}}\allowbreak P^{\text{S}} h_{f,l_{s}}^{\text{S}} {x_{f}^{\text{S}}}^{-\alpha} + \sum_{k \in Q_{l_{s}}\setminus\{s,b\}} \sum_{f\in\mathcal{F}_{k}} P^{\text{B}}_{l_{s}} h_{f,l_{s}}^{\text{B}}\allowbreak {x_{f}^{\text{B}}}^{-\alpha}$.
\subsection{Power Control Strategy} \label{Power_Strategy}
Inspired by the approach in \cite{Yan2018}, each buyer BS is limited to transmit at a certain power $P^{\text{B}}_{l_{s}}$ on sub-band $l_{s}\in\mathcal{L}_{s}$ to ensure that the interference imposed on each UE of seller MNO $s$ is below the maximum interference threshold $\zeta_{s}$. That is, given the maximum interference threshold $\zeta_{s}$, for any SUE $i\in \mathcal{U}_{s}$ and an arbitrary buyer BS which are located at ${\rm{y}}_{i}^{\text{S}}$ and ${\rm{x}}^{\text{B}}$, respectively, we have $P_{l_{s}}^{\text{B}}h_{i,0,l_{s}}{||{\rm{x}}^{\text{B}}-{\rm{y}}_{i}^{\text{S}}||}^{-\alpha}\leq \zeta_{s}$, where $h_{i,0,l_{s}}$ is the channel gain between UE $i$ and this arbitrary BS on sub-band $l_{s}\in\mathcal{L}_{s}$. Accordingly, given the maximum interference threshold $\zeta_{s}$, it is sufficient to satisfy $\max_{\substack{{\rm{y}}_{i}^{\text{S}}\in\Psi_{s}\\\forall i\in\mathcal{U}_{s}}} P_{l_{s}}^{\text{B}}h_{i,0,l_{s}}{||{\rm{x}}^{\text{B}}-{\rm{y}}_{i}^{\text{S}}||}^{-\alpha}\leq \zeta_{s}$. By doing this, the interference constraint is surely satisfied for all other UEs of seller MNO $s$. Let us denote the largest interference channel gain on sub-band $l_{s}$ associated with each buyer BS as ${H}_{l_{s}}^{\text{B}}$. It is defined as ${H}_{l_{s}}^{\text{B}} = \max_{\substack{{\rm{y}}_{i}^{\text{S}}\in\Psi_{s}\\\forall i\in\mathcal{U}_{s}}} h_{i,0,l_{s}}{||{\rm{x}}^{\text{B}}-{\rm{y}}_{i}^{\text{S}}||}^{-\alpha}$. If each buyer BS transmits with the maximum allowable power on licensed sub-band $l_{s}\in\mathcal{L}_{s}$, $P^{\text{B}}_{l_{s}}$ is calculated by
\begin{equation} \label{Power}
P^{\text{B}}_{l_{s}}=\zeta_{s}/H^{\text{B}}_{l_{s}}.
\end{equation}
\vspace{-1.1 em}
\section{Analysis of Performance Metrics}\label{Ana}
In this section, different performance metrics are analyzed. In subsections \ref{S_A} and \ref{S_B}, the coverage probability and average rate, as performance metrics, for UEs are obtained, respectively. According to \eqref{SINR_n} and \eqref{SINR_m}, SINR is dependent on $P_{l_{s}}^{\text{B}}$ which is a random variable, therefore the distribution of $P_{l_{s}}^{\text{B}}$ is needed to drive the coverage probability and average rate of both BUEs and SUEs. Since $P_{l_{s}}^{\text{B}}$ is dependent on $H_{l_{s}}^{\text{B}}$ based on \eqref{Power}, we first introduce the distribution of $H_{l_{s}}^{\text{B}}$ and then the distribution of $P_{l_{s}}^{\text{B}}$ in the following Lemmas \cite{Yan2018}.
\begin{lem} \label{lemma_1}
	The cumulative distribution function (CDF) and the probability density function (PDF) of $H^{\text{B}}_{l_{s}}$ are expressed respectively as 
	\begin{equation}
	F_{H^{\rm{B}}_{l_{s}}}(z)=\exp\left(\frac{-\pi\mu_{s}}{z^{2/\alpha}}\Gamma(1+\frac{2}{\alpha})\right)\label{CDF_H},
	\end{equation}
	and
	\begin{equation}
	f_{H^{\rm{B}}_{l_{s}}}(z)= \frac{2\pi\mu_{s}\Gamma(1+\frac{2}{\alpha})}{\alpha z^{1+\frac{2}{\alpha}}} \exp\left(\frac{-\pi\mu_{s}}{z^{2/\alpha}}\Gamma(1+\frac{2}{\alpha})\right), \label{PDF_H}
	\end{equation}
	where $\Gamma(\kappa)$ is the complete Gamma function calculated as $\Gamma(\kappa)=\int_{0}^{+\infty}{t^{\kappa-1}e^{-t}{\rm{d}}t}$.
\end{lem}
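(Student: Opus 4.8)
The plan is to treat $H_{l_{s}}^{\text{B}}$ as an extremal functional of a marked Poisson point process and compute its distribution via the probability generating functional (PGFL). By the stationarity of $\Psi_{s}$, I would place the arbitrary buyer BS ${\rm x}^{\text{B}}$ at the origin, so that $H_{l_{s}}^{\text{B}} = \max_{{\rm y}_{i}^{\text{S}}\in\Psi_{s}} h_{i,0,l_{s}}\,\|{\rm y}_{i}^{\text{S}}\|^{-\alpha}$ is a maximum over the points of $\Psi_{s}$, each carrying an independent unit-mean exponential mark $h_{i,0,l_{s}}$.

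First I would rewrite the CDF through the complementary ``every point lies below level $z$'' event:
\begin{equation}
F_{H^{\text{B}}_{l_{s}}}(z)=\mathbb{P}\!\left(\bigcap_{{\rm y}\in\Psi_{s}}\{h_{\cdot}\,\|{\rm y}\|^{-\alpha}\le z\}\right)=\mathbb{E}\!\left[\prod_{{\rm y}\in\Psi_{s}}g({\rm y})\right],
\end{equation}
where $g({\rm y})=\mathbb{P}(h\le z\|{\rm y}\|^{\alpha})=1-e^{-z\|{\rm y}\|^{\alpha}}$ is obtained from the exponential CDF after averaging out the fading mark attached to the point at ${\rm y}$. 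Applying the PGFL of a homogeneous PPP of intensity $\mu_{s}$ gives
\begin{equation}
F_{H^{\text{B}}_{l_{s}}}(z)=\exp\!\left(-\mu_{s}\int_{\mathbb{R}^{2}}\bigl(1-g({\rm y})\bigr)\,{\rm d}{\rm y}\right)=\exp\!\left(-\mu_{s}\int_{\mathbb{R}^{2}}e^{-z\|{\rm y}\|^{\alpha}}\,{\rm d}{\rm y}\right).
\end{equation}

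Next I would pass to polar coordinates, $\int_{\mathbb{R}^{2}}e^{-z\|{\rm y}\|^{\alpha}}{\rm d}{\rm y}=2\pi\int_{0}^{\infty}e^{-zr^{\alpha}}r\,{\rm d}r$, and substitute $t=zr^{\alpha}$ to reduce the radial integral to $\frac{2\pi}{\alpha}z^{-2/\alpha}\int_{0}^{\infty}t^{2/\alpha-1}e^{-t}\,{\rm d}t=\frac{2\pi}{\alpha}z^{-2/\alpha}\Gamma(2/\alpha)=\pi z^{-2/\alpha}\Gamma(1+\tfrac{2}{\alpha})$, where the last equality uses $\Gamma(1+\tfrac{2}{\alpha})=\tfrac{2}{\alpha}\Gamma(\tfrac{2}{\alpha})$; this yields exactly \eqref{CDF_H}. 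The density \eqref{PDF_H} then follows by differentiating \eqref{CDF_H} with respect to $z$ via the chain rule, since $\frac{{\rm d}}{{\rm d}z}\bigl(-\pi\mu_{s}\Gamma(1+\tfrac{2}{\alpha})z^{-2/\alpha}\bigr)=\frac{2\pi\mu_{s}\Gamma(1+\frac{2}{\alpha})}{\alpha z^{1+\frac{2}{\alpha}}}$, and multiplying by $F_{H^{\text{B}}_{l_{s}}}(z)$.

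I expect the only genuine subtlety to be the justification of the PGFL step in the presence of the fading marks: one must condition on $\Psi_{s}$, use that the marks $\{h_{i,0,l_{s}}\}$ are i.i.d.\ and independent of $\Psi_{s}$ so the joint event factorizes into per-point events, and then invoke the marking theorem so the expectation collapses to the stated exponential. Everything afterwards — the polar change of variables, the substitution producing the Gamma integral, and the differentiation for the density — is routine, and convergence of the radial integral is automatic because $e^{-zr^{\alpha}}$ decays superexponentially, so no condition beyond $z>0$ is needed.
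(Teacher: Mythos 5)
Your argument is correct and is essentially the standard derivation behind this result: the paper itself omits the proof and defers to \cite{Yan2018}, where the same PGFL/void-probability computation over the exponentially-marked PPP $\Psi_{s}$, followed by the polar-coordinate Gamma-integral and differentiation, yields \eqref{CDF_H} and \eqref{PDF_H}. Your calculation checks out, including the constant $\pi z^{-2/\alpha}\Gamma(1+\tfrac{2}{\alpha})$ via $\Gamma(1+\tfrac{2}{\alpha})=\tfrac{2}{\alpha}\Gamma(\tfrac{2}{\alpha})$.
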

\begin{lem} \label{lemma_2}
	The CDF and PDF of $P^{\text{B}}_{l_{s}}$ are expressed respectively as
	\begin{equation}
	F_{P^{\rm{B}}_{l_{s}}}(z)=1-\exp\left(\frac{-\pi\mu_{s}z^{2/\alpha}}{\zeta_{s}^{2/\alpha}}\Gamma(1+\frac{2}{\alpha})\right)\label{CDF_P},
	\end{equation}
	and
	\begin{equation}
	f_{P^{\rm{B}}_{l_{s}}}(z)= \frac{2\pi\mu_{s}z^{2/\alpha-1}\Gamma(1+\frac{2}{\alpha})}{\alpha \zeta_{s}^{\frac{2}{\alpha}}} \exp\left(\frac{-\pi\mu_{s}z^{2/\alpha}}{\zeta_{s}^{2/\alpha}}\Gamma(1+\frac{2}{\alpha})\right). \label{PDF_P}
	\end{equation}
\end{lem}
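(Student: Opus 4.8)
The plan is to obtain the law of $P^{\text{B}}_{l_{s}}$ directly from the law of $H^{\text{B}}_{l_{s}}$ established in Lemma~\ref{lemma_1}, exploiting the deterministic relation \eqref{Power}, i.e. $P^{\text{B}}_{l_{s}}=\zeta_{s}/H^{\text{B}}_{l_{s}}$. Since $\zeta_{s}>0$ and the map $h\mapsto\zeta_{s}/h$ is strictly decreasing on $(0,\infty)$, for every $z>0$ the event $\{P^{\text{B}}_{l_{s}}\le z\}$ coincides with $\{H^{\text{B}}_{l_{s}}\ge\zeta_{s}/z\}$. Therefore
\begin{equation*}
F_{P^{\text{B}}_{l_{s}}}(z)=\Pr\!\left(H^{\text{B}}_{l_{s}}\ge\frac{\zeta_{s}}{z}\right)=1-F_{H^{\text{B}}_{l_{s}}}\!\left(\frac{\zeta_{s}}{z}\right),
\end{equation*}
and inserting \eqref{CDF_H} together with the simplification $(\zeta_{s}/z)^{-2/\alpha}=z^{2/\alpha}/\zeta_{s}^{2/\alpha}$ gives \eqref{CDF_P}.

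For the PDF, I would simply differentiate \eqref{CDF_P} with respect to $z$. Setting $c=\pi\mu_{s}\Gamma(1+\tfrac{2}{\alpha})\zeta_{s}^{-2/\alpha}$ so that $F_{P^{\text{B}}_{l_{s}}}(z)=1-\exp(-c\,z^{2/\alpha})$, the chain rule yields $f_{P^{\text{B}}_{l_{s}}}(z)=c\,\tfrac{2}{\alpha}\,z^{2/\alpha-1}\exp(-c\,z^{2/\alpha})$, which is precisely \eqref{PDF_P}. As an independent check, the same density follows from the change-of-variables formula $f_{P^{\text{B}}_{l_{s}}}(z)=f_{H^{\text{B}}_{l_{s}}}(\zeta_{s}/z)\,\zeta_{s}z^{-2}$ applied to \eqref{PDF_H}.

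Because the argument is elementary, I do not expect a genuine obstacle. The only steps requiring a little care are the reversal of the inequality when inverting the strictly decreasing transformation --- which is what turns the CDF of $H^{\text{B}}_{l_{s}}$ into a complementary CDF --- and the bookkeeping of the exponent $2/\alpha$ when simplifying the power of $\zeta_{s}/z$. It is also worth noting that $H^{\text{B}}_{l_{s}}\in(0,\infty)$ almost surely (its CDF in \eqref{CDF_H} tends to $0$ as $z\to 0^{+}$ and to $1$ as $z\to\infty$), so $P^{\text{B}}_{l_{s}}$ is almost surely positive and finite and the stated formulas hold for all $z>0$.
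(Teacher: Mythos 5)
Your derivation is correct and is exactly the standard transformation argument the paper implicitly relies on (it cites Yan et al. rather than proving Lemma~\ref{lemma_2} explicitly): invert the monotone map $P^{\text{B}}_{l_{s}}=\zeta_{s}/H^{\text{B}}_{l_{s}}$ to write $F_{P^{\rm{B}}_{l_{s}}}(z)=1-F_{H^{\rm{B}}_{l_{s}}}(\zeta_{s}/z)$ using Lemma~\ref{lemma_1}, then differentiate. Both the CDF/PDF computations and the change-of-variables cross-check are accurate, so there is nothing to add.
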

Now, having the distribution of $P^{\text{B}}_{l_{s}}$, we aim to analyze the coverage probability and average rate for both BUEs and SUEs. To do this, without loss of generality, we analyze the coverage probability and average rate for a typical SUE of a seller MNO and a typical BUE of a buyer MNO considering a network in which multiple seller MNOs lease their sub-bands to multiple buyer MNOs. Similarly, the coverage probability and average rate for other BUEs and SUEs are analyzed.
\subsection{Coverage probability for buyer and seller MNOs}\label{S_A}
For a typical UE of MNO $k\in\mathcal{O}$ and a threshold $\beta$, the SINR coverage probability on sub-band $l_{s}\in\mathcal{L}_{s}$ is defined as the probability that SINR at this typical UE on sub-band $l_{s}\in\mathcal{L}_{s}$ is above $\beta$, that is
\begin{equation}\label{CP}
\mathcal{C}_{k,l_{s}}^{\mathcal{X}} = \mathbb{E}_{P^{\text{B}}_{l_{s}}}\left[\mathbb{P}\left[\gamma_{k,l_{s}}^{\mathcal{X}} > \beta\right]\right],
\end{equation}
where \eqref{CP} holds for any arbitrary UE owned by MNO $\mathcal{X}\in\{\text{S}, \text{B}\}$, where $\text{S}$ and $\text{B}$ represent the seller and buyer MNOs, respectively. In the following Theorems, the coverage probability for a typical BUE and a typical SUE is computed, respectively.
\begin{thm} \label{thm_1}
	The coverage probability for a typical BUE of buyer MNO $b$ on sub-band $l_{s}\in\mathcal{L}_{s}$ is
	\begin{equation}
	\begin{aligned}
	\mathcal{C}_{b,l_{s}}^{\rm{B}}&=\mathbb{E}_{P^{\rm{B}}_{l_{s}}} \left[\mathbb{P}\left[{\gamma}_{b,l_{s}}^{\rm{B}} > \beta\right]\right]\\
	&=\int_{0}^{+\infty} \pi\lambda_{b}\mathbb{E}_{P_{l_{s}}^{\rm{B}}}\left[{P_{l_{s}}^{\rm{B}}}^{2/\alpha}\right]\exp\left(-\beta\sigma^{2}z^{\alpha/2}\right)\\
	&\times\exp\left(-\pi\lambda_{b}\mathbb{E}_{P_{l_{s}}^{\rm{B}}}\left[{P_{l_{s}}^{\rm{B}}}^{2/\alpha}\right]z\right) \mathcal{L}_{I_{b,l_{s}}^{\rm{B}}}\left(\beta z^{\alpha/2}\right) {\rm{d}}z,
	\end{aligned}
	\end{equation}
	where $\rho(\alpha,\beta)=\int_{{\beta}^{-2/\alpha}}^{+\infty}\frac{1}{1+\nu^{\alpha/2}} {\rm{d}}\nu$, $\rho(\alpha,\infty)=\int_{0}^{+\infty}\frac{1}{1+\nu^{\alpha/2}}{\rm{d}}\nu$, and $\mathcal{L}_{I_{b,l_{s}}^{\rm{B}}}$ is the Laplace transform of the interference $I_{b,l_{s}}^{\text{B}}$. $\mathcal{L}_{I_{b,l_{s}}^{\rm{B}}}(t)$ is defined as
	\begin{equation}
	\begin{aligned}
	\mathcal{L}_{I_{b,l_{s}}^{\rm{B}}}(\kappa)&=\exp\left(\pi\lambda_{b}\mathbb{E}_{P_{l_{s}}^{\rm{B}}}\left[{P_{l_{s}}^{\rm{B}}}^{2/\alpha}\right]\kappa^{2/\alpha}\rho(\alpha,\beta)\right)\\
	&\times\exp\left(\pi\lambda_{s}{P^{\rm{S}}_{}}\,^{2/\alpha}\kappa^{2/\alpha}\rho(\alpha,\infty)\right)\\
	&\times\exp\left(\pi\lambda_{b,l_{s}}\mathbb{E}_{P_{l_{s}}^{\rm{B}}}\left[{P_{l_{s}}^{\rm{B}}}^{2/\alpha}\right]\kappa^{2/\alpha}\rho(\alpha,\infty)\right),
	\end{aligned}
	\end{equation}
	where $\lambda_{b,l_{s}}=\sum_{k \in Q_{l_{s}}\setminus\{s,b\}}\lambda_{k}$.
\end{thm}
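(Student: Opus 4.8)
\emph{Proof plan.} The plan is to collapse the whole problem onto a one-dimensional Poisson picture via the standard power transformation, and then run the usual Rayleigh-fading coverage argument on it. Treating the per-BS transmit powers $P^{\mathrm{B}}_{l_s}$ as i.i.d.\ marks with the law of Lemma~\ref{lemma_2} (as the model prescribes), I would first apply the map $\mathbf{x}\mapsto P^{-2/\alpha}\,\|\mathbf{x}\|^{2}$ to buyer $b$'s own BSs and to each of the three interfering families of BSs seen by the typical BUE. Since $P_f\|\mathbf{x}_f\|^{-\alpha}=\bigl(P_f^{-2/\alpha}\|\mathbf{x}_f\|^{2}\bigr)^{-\alpha/2}$, after the map each BS is described by a single coordinate $Z_f\ge 0$ and contributes received power $Z_f^{-\alpha/2}$ times a fresh unit-mean exponential fading. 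By the mapping/marking theorem for Poisson processes, each image is a \emph{homogeneous} PPP on $(0,\infty)$: buyer $b$'s BSs with rate $\pi\lambda_b\,\mathbb{E}\bigl[(P^{\mathrm{B}}_{l_s})^{2/\alpha}\bigr]$; seller $s$'s BSs with rate $\pi\lambda_s (P^{\mathrm{S}})^{2/\alpha}$ (its power being deterministic); and the superposition of the remaining buyers' BSs --- one PPP of spatial rate $\lambda_{b,l_s}=\sum_{k\in Q_{l_s}\setminus\{s,b\}}\lambda_k$ --- with rate $\pi\lambda_{b,l_s}\,\mathbb{E}\bigl[(P^{\mathrm{B}}_{l_s})^{2/\alpha}\bigr]$. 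The moment $\mathbb{E}\bigl[(P^{\mathrm{B}}_{l_s})^{2/\alpha}\bigr]$ is finite and can be computed from Lemma~\ref{lemma_2}, and $\rho(\alpha,\infty)<\infty$ precisely because $\alpha>2$, so everything below is well defined.

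Next I would treat the serving link. Maximum-average-received-power association makes the serving BS the one \emph{minimizing} $Z_f$ over buyer $b$'s image points; call that minimum $Z_0$. The void probability of a homogeneous PPP on $(0,\infty)$ gives $\mathbb{P}(Z_0>z)=\exp\bigl(-\pi\lambda_b\,\mathbb{E}[(P^{\mathrm{B}}_{l_s})^{2/\alpha}]\,z\bigr)$, so $Z_0$ is exponential with that rate --- in particular the outer expectation over the serving-BS transmit power is absorbed into the factor $\mathbb{E}[(P^{\mathrm{B}}_{l_s})^{2/\alpha}]$. Writing $\gamma^{\mathrm{B}}_{b,l_s}=h^{\mathrm{B}}_{0,l_s}Z_0^{-\alpha/2}/(I^{\mathrm{B}}_{b,l_s}+\sigma^2)$, conditioning on $Z_0=z$, and using that $h^{\mathrm{B}}_{0,l_s}\sim\mathrm{Exp}(1)$ is independent of the interference, the SINR tail becomes a Laplace transform,
\[
\mathbb{P}\bigl[\gamma^{\mathrm{B}}_{b,l_s}>\beta\bigm| Z_0=z\bigr]=\exp\bigl(-\beta\sigma^2 z^{\alpha/2}\bigr)\,\mathcal{L}_{I^{\mathrm{B}}_{b,l_s}}\!\bigl(\beta z^{\alpha/2}\bigr),
\]
and integrating this against the density of $Z_0$ reproduces exactly the claimed integral for $\mathcal{C}^{\mathrm{B}}_{b,l_s}$ (the prefactor $\pi\lambda_b\,\mathbb{E}[(P^{\mathrm{B}}_{l_s})^{2/\alpha}]$ together with $\exp(-\pi\lambda_b\,\mathbb{E}[\cdot]\,z)$ being precisely the density of $Z_0$). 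One point of care: conditioned on $Z_0=z$, buyer $b$'s interferers are exactly the image points lying in $(z,\infty)$, which by Slivnyak's theorem still form a homogeneous PPP of the same rate on $(z,\infty)$, independent of the seller and other-buyer images.

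It then remains to evaluate $\mathcal{L}_{I^{\mathrm{B}}_{b,l_s}}(\kappa)$ at $\kappa=\beta z^{\alpha/2}$ by factoring over the three independent image PPPs and using the PGFL of a homogeneous PPP on an interval together with $\mathbb{E}[e^{-t h}]=1/(1+t)$ for $h\sim\mathrm{Exp}(1)$: for a rate-$\eta$ homogeneous PPP on $(a,\infty)$,
\[
\mathbb{E}\!\left[\exp\!\left(-\kappa\sum_f h_f u_f^{-\alpha/2}\right)\right]=\exp\!\left(-\eta\int_a^{\infty}\frac{\mathrm{d}u}{1+u^{\alpha/2}/\kappa}\right)=\exp\!\left(-\eta\,\kappa^{2/\alpha}\!\int_{a\kappa^{-2/\alpha}}^{\infty}\frac{\mathrm{d}\nu}{1+\nu^{\alpha/2}}\right),
\]
via the substitution $\nu=u\kappa^{-2/\alpha}$. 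For the seller image and the other-buyer image there is no exclusion ($a=0$), which gives the two factors involving $\rho(\alpha,\infty)$; for buyer $b$'s self-interference the exclusion is $a=z$, and since $\kappa=\beta z^{\alpha/2}$ the lower limit collapses to $a\kappa^{-2/\alpha}=z(\beta z^{\alpha/2})^{-2/\alpha}=\beta^{-2/\alpha}$, giving the factor involving $\rho(\alpha,\beta)$. Multiplying the three factors and substituting the three rates yields the stated $\mathcal{L}_{I^{\mathrm{B}}_{b,l_s}}(\kappa)$. I expect this last step to be the main obstacle: one has to get the self-interference exclusion region right and carry the coupling $\kappa=\beta z^{\alpha/2}$ through the substitution so that the lower limit simplifies to the clean constant $\beta^{-2/\alpha}$; once the power transformation has put all point processes on a common one-dimensional footing, the remainder is bookkeeping.
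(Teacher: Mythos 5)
Your proposal is correct and takes essentially the same route as the paper's proof, which is omitted and deferred to \cite[Lemma 1]{Gupta2016}: the power/displacement transformation to one-dimensional homogeneous PPPs with rates $\pi\lambda\,\mathbb{E}[P^{2/\alpha}]$, the exponential law of the transformed serving distance, the Rayleigh-fading conditioning that turns the SINR tail into $e^{-\beta\sigma^{2}z^{\alpha/2}}\mathcal{L}_{I^{\rm B}_{b,l_s}}(\beta z^{\alpha/2})$, and the PGFL evaluation in which the self-interference exclusion $a=z$ collapses to the lower limit $\beta^{-2/\alpha}$ under $\kappa=\beta z^{\alpha/2}$. Note only that your derivation (correctly) produces negative exponents in the three factors of $\mathcal{L}_{I^{\rm B}_{b,l_s}}(\kappa)$, so the positive signs in the theorem's printed expression should be read as sign typos.
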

\begin{proof}
	The proof is similar to other works done in PPP,
	e.g. \cite[Lemma 1]{Gupta2016}, (omitted due to page limitation).
\end{proof}
In the following Theorem, the coverage probability for a typical SUE is given.
\begin{thm} \label{thm_2}
	The coverage probability for a typical SUE of seller MNO $s$ on sub-band $l_{s}\in\mathcal{L}_{s}$ is
	\begin{equation}
	\begin{aligned}
	&\mathcal{C}_{s,l_{s}}^{\rm{S}}=\mathbb{E}_{P^{\text{B}}_{l_{s}}} \left[\mathbb{P}\left[{\gamma}_{s,l_{s}}^{\rm{S}} > \beta\right]\right] = \int_{0}^{+\infty} \pi\lambda_{s}{P^{\rm{S}}}\,^{2/\alpha}\\
	&\times\exp\left(-\beta\sigma^{2}z^{\alpha/2}\right)\exp\left(-\pi\lambda_{s}{P^{\rm{S}}}\,^{2/\alpha}z\right) \mathcal{L}_{I_{s,l_{s}}^{\rm{S}}}\left(\beta z^{\alpha/2}\right) {\rm{d}}z,
	\end{aligned}
	\end{equation}
	where $\mathcal{L}_{I_{s,l_{s}}^{\rm{S}}}(t)$ is the Laplace transform of the interference $I_{s,l_{s}}^{\rm{S}}$, that is
	\begin{equation}
	\begin{aligned}
	\mathcal{L}_{I_{s,l_{s}}^{\rm{S}}}(\kappa)&=\exp\left(\pi\lambda_{s}{P^{\rm{S}}_{}}\,^{2/\alpha}\kappa^{2/\alpha}\rho(\alpha,\beta)\right)\\
	&\times\exp\left(\pi\lambda_{s,l_{s}}\mathbb{E}_{P_{l_{s}}^{\rm{B}}}\left[{P_{l_{s}}^{\rm{B}}}^{2/\alpha}\right]\kappa^{2/\alpha}\rho(\alpha,\infty)\right),
	\end{aligned}
	\end{equation}
	where $\lambda_{s,l_{s}}=\sum_{b \in Q_{l_{s}}\setminus\{s\}}\lambda_{b}$.
\end{thm}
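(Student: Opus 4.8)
The plan is to follow the standard stochastic-geometry recipe for SINR coverage in a Poisson cellular network, adapted to the two-type interference structure (seller self-interference plus buyer cross-interference) of the model, mirroring the derivation behind Theorem~\ref{thm_1}. Since every BS of seller MNO $s$ transmits with the same power $P^{\mathrm{S}}$, the maximum-average-received-power association rule reduces to associating with the nearest BS of $\Phi_s$; hence the serving distance $x_0^{\mathrm{S}}$ has density $f_{x_0^{\mathrm{S}}}(r)=2\pi\lambda_s r\exp(-\pi\lambda_s r^2)$, obtained from the void probability of a PPP of intensity $\lambda_s$. Conditioning on $x_0^{\mathrm{S}}=r$ and using $h_{0,l_s}^{\mathrm{S}}\sim\mathrm{Exp}(1)$,
\[
\mathbb{P}\!\left[\gamma^{\mathrm{S}}_{s,l_s}>\beta\mid x_0^{\mathrm{S}}=r\right]=\exp\!\left(-\tfrac{\beta\sigma^2 r^{\alpha}}{P^{\mathrm{S}}}\right)\mathcal{L}_{I^{\mathrm{S}}_{s,l_s}}\!\left(\tfrac{\beta r^{\alpha}}{P^{\mathrm{S}}}\right),
\]
where $\mathcal{L}_{I^{\mathrm{S}}_{s,l_s}}$ is understood conditionally on $r$, since the serving BS carves an exclusion region out of the seller self-interference. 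The outer $\mathbb{E}_{P^{\mathrm{B}}_{l_s}}$ in \eqref{CP} acts trivially on the desired-signal side here, because the serving BS is a seller BS with deterministic power $P^{\mathrm{S}}$; it only averages the interfering buyer powers, which I would fold into the Laplace transform.

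Next I would evaluate $\mathcal{L}_{I^{\mathrm{S}}_{s,l_s}}$. Write $I^{\mathrm{S}}_{s,l_s}=I_1+I_2$, with $I_1$ the contribution of $\Phi_s\setminus\{0\}$ and $I_2$ that of all buyer BSs of the MNOs in $Q_{l_s}\setminus\{s\}$; independence of the underlying PPPs and of all fading and power marks gives $\mathcal{L}_{I^{\mathrm{S}}_{s,l_s}}=\mathcal{L}_{I_1}\mathcal{L}_{I_2}$. For $I_1$: conditioned on $x_0^{\mathrm{S}}=r$, the remaining seller BSs form a PPP on $\{\|x\|>r\}$, so the probability generating functional (PGFL), after averaging out the unit-mean exponential fading, gives $\mathcal{L}_{I_1}(\kappa)=\exp\!\big(-2\pi\lambda_s\int_r^{\infty}\frac{x}{1+x^{\alpha}/(\kappa P^{\mathrm{S}})}\,\mathrm{d}x\big)$; the substitution $x^{2}=(\kappa P^{\mathrm{S}})^{2/\alpha}\nu$, specialized to the relevant argument $\kappa=\beta r^{\alpha}/P^{\mathrm{S}}$, sends the lower limit of the resulting $\nu$-integral to $\beta^{-2/\alpha}$ and produces the $\rho(\alpha,\beta)$ term. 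For $I_2$: the buyer BSs form a PPP of intensity $\lambda_{s,l_s}=\sum_{b\in Q_{l_s}\setminus\{s\}}\lambda_b$ over all of $\mathbb{R}^2$ (there is no exclusion region around the SUE), marked by i.i.d.\ transmit powers $P^{\mathrm{B}}_{l_s}$ distributed as in Lemma~\ref{lemma_2}; the PGFL now integrates over $[0,\infty)$, which yields $\rho(\alpha,\infty)$, and since the power mark enters only through the rescaling $x\mapsto(\kappa P^{\mathrm{B}}_{l_s})^{2/\alpha}$ it separates out of the PGFL as $\mathbb{E}_{P^{\mathrm{B}}_{l_s}}[(P^{\mathrm{B}}_{l_s})^{2/\alpha}]$, which is finite by Lemma~\ref{lemma_2}. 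Multiplying $\mathcal{L}_{I_1}$ and $\mathcal{L}_{I_2}$ gives the claimed $\mathcal{L}_{I^{\mathrm{S}}_{s,l_s}}(\kappa)$.

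Finally I would de-condition and rescale: $\mathcal{C}^{\mathrm{S}}_{s,l_s}=\int_0^{\infty}2\pi\lambda_s r\,e^{-\pi\lambda_s r^2}e^{-\beta\sigma^2 r^{\alpha}/P^{\mathrm{S}}}\,\mathcal{L}_{I^{\mathrm{S}}_{s,l_s}}(\beta r^{\alpha}/P^{\mathrm{S}})\,\mathrm{d}r$, and apply the change of variable $z=r^{2}/(P^{\mathrm{S}})^{2/\alpha}$, under which $r^{\alpha}=P^{\mathrm{S}}z^{\alpha/2}$, $\beta r^{\alpha}/P^{\mathrm{S}}=\beta z^{\alpha/2}$, and $2\pi\lambda_s r\,\mathrm{d}r=\pi\lambda_s(P^{\mathrm{S}})^{2/\alpha}\,\mathrm{d}z$; this turns the integral into exactly the stated formula. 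I expect the main obstacle to be the bookkeeping for $I_2$: one must justify treating the aggregate buyer BSs of $Q_{l_s}\setminus\{s\}$ as a single marked PPP whose per-BS transmit powers are i.i.d.\ and independent of the BS locations, and verify that these marks factor out of the PGFL cleanly as the single moment $\mathbb{E}_{P^{\mathrm{B}}_{l_s}}[(P^{\mathrm{B}}_{l_s})^{2/\alpha}]$. The secondary delicate point is the exclusion-region accounting for $I_1$, i.e., carrying the domain $\{\|x\|>r\}$ through the substitution so that the lower integration limit becomes precisely $\beta^{-2/\alpha}$.
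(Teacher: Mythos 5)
Your proposal is correct and takes essentially the same route as the paper, which omits the proof and simply cites the analogous argument of \cite[Lemma 4]{Gupta2016}: nearest-BS association with the Rayleigh-fading/Laplace-transform step, splitting the interference into the seller field outside the serving radius (yielding $\rho(\alpha,\beta)$) and the power-marked buyer field over the whole plane (yielding $\rho(\alpha,\infty)$ and the moment $\mathbb{E}_{P^{\rm{B}}_{l_{s}}}[{P^{\rm{B}}_{l_{s}}}^{2/\alpha}]$), followed by the change of variables $z=r^{2}/{P^{\rm{S}}}^{2/\alpha}$. One remark: your PGFL computation correctly produces the Laplace factors with \emph{negative} exponents, e.g.\ $\exp\left(-\pi\lambda_{s}{P^{\rm{S}}}^{2/\alpha}\kappa^{2/\alpha}\rho(\alpha,\beta)\right)$, so the positive signs inside the exponentials in the printed statement are evidently a sign typo, and you should not assert that your result "gives the claimed $\mathcal{L}_{I_{s,l_{s}}^{\rm{S}}}(\kappa)$" without noting that discrepancy.
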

\begin{proof}
	The proof is similar to \cite[Lemma 4]{Gupta2016}.
\end{proof}
\subsection{Average rate for buyer and seller MNOs}\label{S_B}
The average rate of a typical BUE of MNO $b$ on all shared sub-bands is given by
\begin{equation}\label{R_m}
\mathcal{R}_{b}^{\text{B}} = \sum_{l_{s}:\,b\in Q_{l_{s}}} \mathbb{E}_{P^{\text{B}}_{l_{s}}}\left[\mathbb{E}_{\gamma_{b,l_{s}}^{\text{B}}}\left[\ln\left(1 + \gamma_{b,l_{s}}^{\text{B}}\right)\right]\right],
\end{equation}
where the sum is over all licensed sub-bands which are leased by buyer MNO $b$. Now, $\mathcal{R}_{b}^{\text{B}}$ is presented in the following Theorem.
\begin{thm} \label{thm_3}
	The average rate for a typical BUE of buyer MNO $b$ is
	\begin{equation} \label{r_m}
	\begin{aligned}
	\mathcal{R}^{\rm{B}}_{b}&=\sum_{l_{s}:b\in Q_{l_{s}}} \mathbb{E}_{P^{\text{B}}_{l_{s}}} \left[\mathbb{E}_{{\gamma}_{b,l_{s}}^{\rm{B}}}\left[\ln\left(1 + {\gamma}_{b,l_{s}}^{\rm{B}}\right)\right] \right]\\
	&=\sum_{l_{s}:b\in Q_{l_{s}}} \mathbb{E}_{P^{\rm{B}}_{l_{s}}} \left[\int_{0}^{\infty}\mathbb{P}\left[\ln\left(1 + {\gamma}_{b,l_{s}}^{\rm{B}}\right) > t\right] {\rm{d}} t\right]\\
	&=\sum_{l_{s}:b\in Q_{l_{s}}} \int_{0}^{\infty}\int_{0}^{+\infty} \pi\lambda_{b}\mathbb{E}_{P_{l_{s}}^{\rm{B}}}\left[{P_{l_{s}}^{\rm{B}}}^{2/\alpha}\right]\\
	&\times\exp\left(-(e^{t}-1)\sigma^{2}z^{\alpha/2}\right)\exp\left(-\pi\lambda_{b}\mathbb{E}_{P_{l_{s}}^{\rm{B}}}\left[{P_{l_{s}}^{\rm{B}}}^{2/\alpha}\right]z\right)\\
	&\times\mathcal{L}_{I_{b,l_{s}}^{\rm{B}}}\left((e^{t}-1)z^{\alpha/2}\right) {\rm{d}}z \;{\rm{d}}t.
	\end{aligned}
	\end{equation}
\end{thm}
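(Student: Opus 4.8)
The plan is to follow the standard Poisson point process technique for per‑user rate, combining it with the conditioning on the random buyer transmit power $P^{\text{B}}_{l_s}$ already used in Theorems~\ref{thm_1} and~\ref{thm_2}. First I would fix a sub‑band $l_s$ with $b\in Q_{l_s}$ and condition on $P^{\text{B}}_{l_s}=p$; for that fixed $p$ the SINR $\gamma^{\text{B}}_{b,l_s}$ is the usual interference‑limited SINR of a PPP cellular network, so I can write $\mathbb{E}_{\gamma^{\text{B}}_{b,l_s}}[\ln(1+\gamma^{\text{B}}_{b,l_s})]=\int_0^\infty \mathbb{P}[\ln(1+\gamma^{\text{B}}_{b,l_s})>t]\,\mathrm{d}t=\int_0^\infty \mathbb{P}[\gamma^{\text{B}}_{b,l_s}>e^{t}-1]\,\mathrm{d}t$, using that $\ln(1+x)$ is nonnegative. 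This reduces the rate to an integral of the coverage probability with threshold $\beta=e^{t}-1$, so the inner expression is literally $\mathcal{C}^{\text{B}}_{b,l_s}$ from Theorem~\ref{thm_1} with $\beta$ replaced by $e^{t}-1$.

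Next I would invoke Theorem~\ref{thm_1}: for the event $\{\gamma^{\text{B}}_{b,l_s}>\beta\}$, conditioning on the serving distance $x_0^{\text{B}}$ (whose density under nearest‑BS association with intensity weighted by $\mathbb{E}[{P^{\text{B}}_{l_s}}^{2/\alpha}]$ is the standard Rayleigh‑type density), the Rayleigh fading of the desired link gives an exponential tail, producing the factor $\exp(-\beta\sigma^2 z^{\alpha/2})\,\mathcal{L}_{I^{\text{B}}_{b,l_s}}(\beta z^{\alpha/2})$, and integrating over the serving‑distance density yields the prefactor $\pi\lambda_b\mathbb{E}_{P^{\text{B}}_{l_s}}[{P^{\text{B}}_{l_s}}^{2/\alpha}]$ and the exponential $\exp(-\pi\lambda_b\mathbb{E}_{P^{\text{B}}_{l_s}}[{P^{\text{B}}_{l_s}}^{2/\alpha}]z)$ after the change of variable to $z$. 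Substituting $\beta=e^{t}-1$ then gives exactly the inner double integral over $z$ in \eqref{r_m}. Finally I would take the expectation $\mathbb{E}_{P^{\text{B}}_{l_s}}[\cdot]$, integrate over $t\in(0,\infty)$, and sum over all sub‑bands $l_s$ with $b\in Q_{l_s}$, arriving at the claimed triple sum/integral; Tonelli's theorem justifies every interchange of integrals since all integrands are nonnegative.

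One point that needs care is that the Laplace transform $\mathcal{L}_{I^{\text{B}}_{b,l_s}}$ as stated in Theorem~\ref{thm_1} already carries a $\rho(\alpha,\beta)$ term in its first exponential, which depends on the coverage threshold; when we pass from coverage at threshold $\beta$ to rate, the threshold becomes $e^{t}-1$ and the same $\beta$ appearing inside $\rho(\alpha,\beta)$ must be reinterpreted as $e^{t}-1$ as well. I would make this substitution explicit, so that $\mathcal{L}_{I^{\text{B}}_{b,l_s}}\big((e^{t}-1)z^{\alpha/2}\big)$ in \eqref{r_m} is understood with $\rho(\alpha,e^{t}-1)$ in the intra‑tier term (the own‑MNO interference, which is thinned by the successful‑association constraint) and $\rho(\alpha,\infty)$ in the seller‑tier and cross‑buyer‑tier terms. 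The main obstacle is therefore not a deep estimate but bookkeeping: keeping the conditioning on $P^{\text{B}}_{l_s}$, the serving‑distance integral, the $t$‑integral, and the sum over shared sub‑bands cleanly separated, and making sure the threshold substitution $\beta\mapsto e^{t}-1$ is applied consistently everywhere $\beta$ occurs in Theorem~\ref{thm_1}. Once that is set up, the remaining steps are the routine PPP computations already used for Theorem~\ref{thm_1}, and the result follows.
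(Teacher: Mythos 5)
Your proposal is correct and follows essentially the same route as the paper, which (implicitly, in the chain of equalities of Theorem~\ref{thm_3} itself) writes the rate as the tail integral $\int_{0}^{\infty}\mathbb{P}[\ln(1+\gamma^{\rm{B}}_{b,l_{s}})>t]\,{\rm{d}}t$ and then inserts the coverage expression of Theorem~\ref{thm_1} with the threshold $\beta$ replaced by $e^{t}-1$, summing over the shared sub-bands. Your remark that the substitution must also be applied inside $\rho(\alpha,\beta)$ in the own-tier term of $\mathcal{L}_{I^{\rm{B}}_{b,l_{s}}}$ is a correct and worthwhile clarification of the paper's notation, not a deviation from its argument.
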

The average rate of a typical SUE of MNO $s$ on all its sub-bands is expressed as
\begin{equation}\label{R_n}
\mathcal{R}_{s}^{\text{S}} = \sum_{l_{s}\in \mathcal{L}_{s}} \mathbb{E}_{P^{\text{B}}_{l_{s}}}\left[\mathbb{E}_{\gamma_{s,l_{s}}^{\text{S}}}\left[\ln\left(1 + \gamma_{s,l_{s}}^{\text{S}}\right)\right]\right],
\end{equation}
that is provided in the following Theorem.
\begin{thm} \label{thm_4}
	The average rate for a typical SUE of seller MNO $s$ is
	\begin{equation}\label{r_n}
	\begin{aligned}
	\mathcal{R}_{s}^{\rm{S}}&=\sum_{l_{s}\in \mathcal{L}_{s}} \mathbb{E}_{P^{\rm{B}}_{l_{s}}} \left[\mathbb{E}_{{\gamma}_{s,l_{s}}^{\rm{S}}}\left[\ln\left(1 + {\gamma}_{s,l_{s}}^{\rm{S}}\right)\right]\right]\\
	&=\sum_{l_{s}\in \mathcal{L}_{s}} \mathbb{E}_{P^{\rm{B}}_{l_{s}}} \left[\int_{0}^{\infty}\mathbb{P}\left[\ln\left(1 + {\gamma}_{s,l_{s}}^{\rm{S}}\right) > t\right] {\rm{d}} t\right]\\
	&=\sum_{l_{s}\in \mathcal{L}_{s}}\int_{0}^{\infty}\int_{0}^{+\infty}\pi\lambda_{s}{P^{\rm{S}}}\,^{2/\alpha} \exp\left(-(e^{t}-1)\sigma^{2}z^{\alpha/2}\right) \\
	&\times\exp\left(- \pi\lambda_{s}{P^{\rm{S}}}\,^{2/\alpha}z\right) \mathcal{L}_{I_{s,l_{s}}^{\rm{S}}}\left((e^{t}-1)z^{\alpha/2}\right) {\rm{d}}z \;{\rm{d}}t.
	\end{aligned}
	\end{equation}
\end{thm}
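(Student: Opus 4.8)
The plan is to obtain Theorem~\ref{thm_4} as a direct consequence of the coverage probability of Theorem~\ref{thm_2}, via the standard \emph{rate-from-coverage} identity; structurally the argument is the exact analogue of the buyer-side Theorem~\ref{thm_3}. First I would fix a shared sub-band $l_{s}\in\mathcal{L}_{s}$ and note that $\ln(1+\gamma^{\text{S}}_{s,l_{s}})$ is a non-negative random variable, the randomness being carried by the fading gains, the point processes $\Phi_{k},\Psi_{k}$, and the buyer transmit power $P^{\text{B}}_{l_{s}}$ of the interfering buyer BSs. By the layer-cake (tail-integral) representation of the mean of a non-negative random variable together with the tower property, $\mathbb{E}_{P^{\text{B}}_{l_{s}}}\!\big[\mathbb{E}_{\gamma^{\text{S}}_{s,l_{s}}}[\ln(1+\gamma^{\text{S}}_{s,l_{s}})]\big]=\int_{0}^{\infty}\mathbb{P}\big[\ln(1+\gamma^{\text{S}}_{s,l_{s}})>t\big]\,{\rm d}t$, where the probability on the right is taken over all sources of randomness, $P^{\text{B}}_{l_{s}}$ included; moving the expectation over $P^{\text{B}}_{l_{s}}$ inside the $t$-integral is licensed by Tonelli's theorem because the integrand is non-negative.

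Next, since $\ln$ is strictly increasing, $\mathbb{P}[\ln(1+\gamma^{\text{S}}_{s,l_{s}})>t]=\mathbb{P}[\gamma^{\text{S}}_{s,l_{s}}>e^{t}-1]$, which by the definition~\eqref{CP} of the coverage probability is exactly $\mathcal{C}^{\text{S}}_{s,l_{s}}$ with the threshold $\beta$ taken to be $e^{t}-1$. I would then substitute the closed-form expression supplied by Theorem~\ref{thm_2}, replacing $\beta$ by $e^{t}-1$ everywhere it occurs --- in particular inside the Laplace functional $\mathcal{L}_{I^{\text{S}}_{s,l_{s}}}$, so that $\rho(\alpha,\beta)$ becomes $\rho(\alpha,e^{t}-1)$ and the argument becomes $(e^{t}-1)z^{\alpha/2}$. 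This converts $\int_{0}^{\infty}\mathbb{P}[\gamma^{\text{S}}_{s,l_{s}}>e^{t}-1]\,{\rm d}t$ into the inner double integral appearing in~\eqref{r_n}. It then remains to exchange the order of the $z$- and $t$-integrations (once more by Tonelli, all integrands being non-negative) and to sum the identity over all $l_{s}\in\mathcal{L}_{s}$ as prescribed by~\eqref{R_n}, which yields~\eqref{r_n}. The only departure from Theorem~\ref{thm_3} is that the serving link of the typical SUE carries the deterministic power $P^{\text{S}}$ rather than a realization of the random power $P^{\text{B}}_{l_{s}}$, which is why $\pi\lambda_{s}{P^{\text{S}}}^{\,2/\alpha}$ and $\exp(-\pi\lambda_{s}{P^{\text{S}}}^{\,2/\alpha}z)$ stand in place of their $P^{\text{B}}_{l_{s}}$-averaged analogues.

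I do not expect a genuine obstacle here: granted Theorem~\ref{thm_2}, the entire content is the layer-cake identity together with the substitution $\beta=e^{t}-1$. The step most deserving of care is the justification of the two Tonelli interchanges --- pulling $\mathbb{E}_{P^{\text{B}}_{l_{s}}}$ through the $t$-integral, which is what lets the inner probability be recognized as the already-$P^{\text{B}}_{l_{s}}$-averaged quantity $\mathcal{C}^{\text{S}}_{s,l_{s}}$ of Theorem~\ref{thm_2}, and swapping the $z$- and $t$-integrals in the final display. Both are valid because all integrands are non-negative, and finiteness of the resulting double integral is inherited from the single-integral convergence already needed in Theorem~\ref{thm_2} (which uses $\alpha>2$ so that $\rho(\alpha,\infty)<\infty$): the inner $z$-integral equals $\mathcal{C}^{\text{S}}_{s,l_{s}}$ at threshold $e^{t}-1$, hence is at most $1$, and the usual polynomial-in-threshold decay of this coverage probability makes the outer $t$-integral converge. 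I therefore anticipate no integrability obstruction.
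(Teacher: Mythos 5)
Your proposal is correct and follows exactly the route the paper intends (and indeed embeds in the statement of Theorem~\ref{thm_4} itself): the layer-cake identity for the non-negative variable $\ln(1+\gamma^{\rm S}_{s,l_s})$, recognition of the tail probability as the coverage probability of Theorem~\ref{thm_2} at threshold $\beta=e^{t}-1$ (including inside $\rho(\alpha,\beta)$ in the Laplace transform), Tonelli for the interchanges, and summation over $l_s\in\mathcal{L}_s$ as in \eqref{R_n}. No discrepancy with the paper's (omitted, standard rate-from-coverage) argument.
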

\section{Numerical results}\label{Res}
In this section, we validate the analytical results and evaluate
the performance of our adopted power control strategy. To this end, we consider a cellular network with one seller MNO and one buyer MNO. The BSs and UEs of MNOs are spatially distributed according to independent PPPs inside a circular area of 500 meter radius. The summary of other simulation parameters is given in Table \ref{tab:parameters}.

Fig. \ref{demo2} illustrates the coverage probability for seller and buyer MNOs versus SINR threshold ($\beta$) for different values of $\lambda_{b}\;(b\in\mathcal{B})$. We can observe that our analytical expressions provided in Theorem \ref{thm_1} and \ref{thm_2} and simulations are perfectly matched. From Fig. \ref{demo2}, it can be seen that increasing the value of buyer MNO's BS intensity ($\lambda_{b},b\in\mathcal{B}$) results in improving the coverage probability of the buyer MNO, while bringing about a negligible impact on the seller MNO's coverage probability. It is worth noting that the seller MNO and buyer MNO can gain significant coverage probability for their UEs by selecting a suitable value of BS intensity.

Fig. \ref{demo4} demonstrates the coverage probability for seller and buyer MNOs versus SINR threshold ($\beta$) for different values of $\mu_{s}\;(s\in\mathcal{S})$. This figure shows that with increasing the value of seller MNO's UE intensity, the coverage probability of the seller MNO remains fixed. Nonetheless, the coverage probability of the buyer MNO decreases. The reason is that with increasing the UE intensity of the seller MNO, $P_{l_{s}}^{\text{B}}$ may decrease based on its definition in Lemma \ref{lemma_2} leading to a lower coverage probability for the buyer MNO. This claim has been confirmed in both simulation and analytical results in Fig. \ref{demo4}.
\begin{table}[t]
	\caption{Simulation parameters}
	\label{tab:parameters}
	\centering
	\renewcommand{\arraystretch}{1.4}
	\begin{tabular}{|l|l||l|l|}
		\cline{1-4}
		\textbf{Parameter}&\textbf{Value} &\textbf{Parameter}&\textbf{Value} \\
		\hline
		\hline
		$|\mathcal{S}|$& 1 & $|\mathcal{B}|$& 1\\
		\hline
		$\lambda_{s},s\in\mathcal{S}$&$8/(\pi\times{500}^{2})$ &
		$\lambda_{b},b\in\mathcal{B}$&$[8,16]/(\pi\times{500}^{2})$\\
		\hline
		$\mu_{s},s\in\mathcal{S}$&$[50,70]/(\pi\times{500}^{2})$&
		$\mu_{b},b\in\mathcal{B}$&$50/(\pi\times{500}^{2})$\\
		\hline
		$P^{\text{S}}$& 10 dBm &$\zeta_{s}, s\in\mathcal{S}$& -100 dBm\\
		\hline
		$\alpha$&5 &$\sigma^{2}$& -120 dBm\\
		\hline
	\end{tabular}
\end{table}
\begin{figure}[!t]
	\centering
	{\includegraphics[width=9.5 cm,height=6.5 cm]{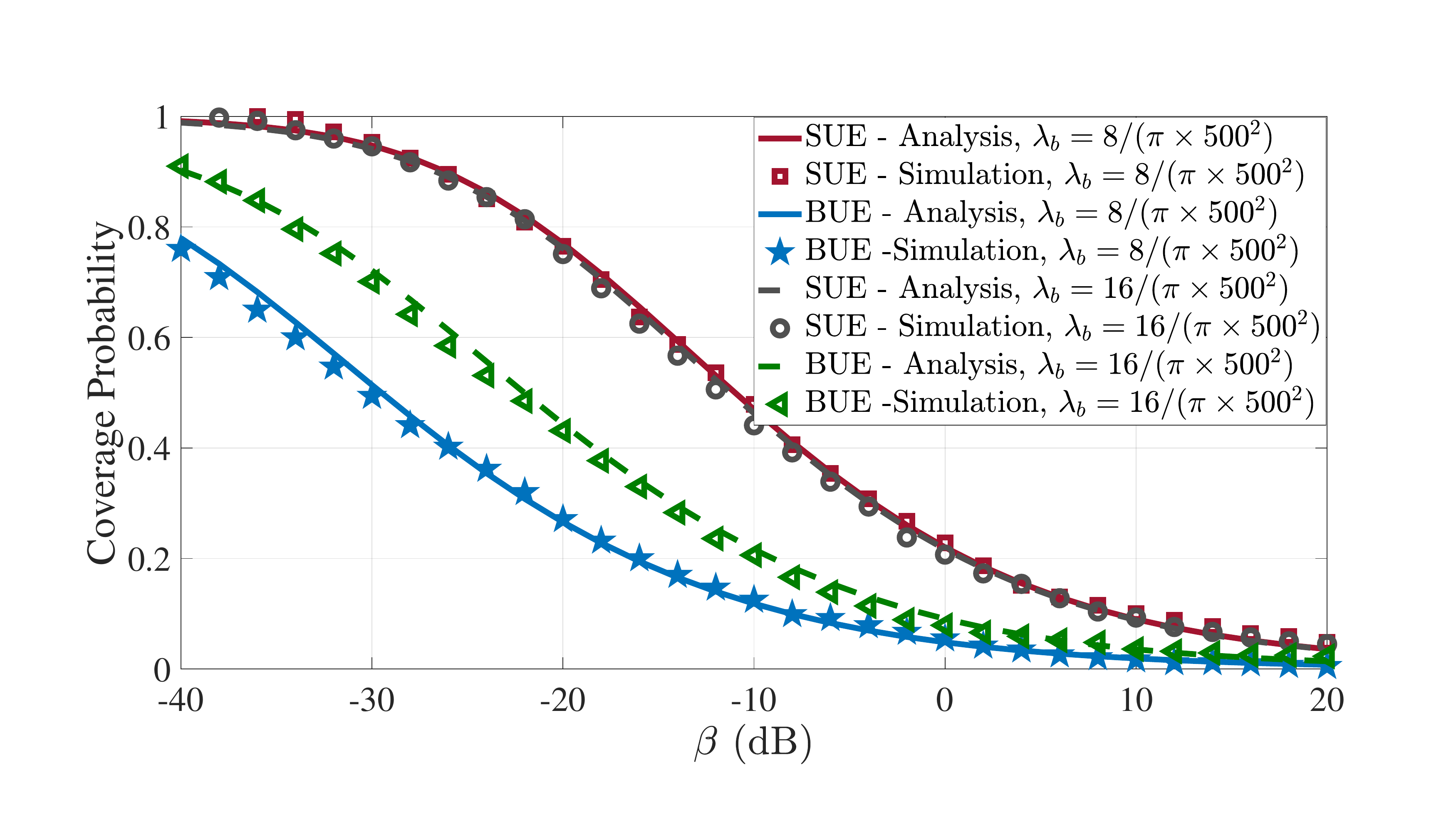}
		\caption{Coverage probability of seller and buyer MNOs versus SINR threshold ($\beta$) for different values of  $\lambda_{b}\;(b\in\mathcal{B})$.\label{demo2}}}
\end{figure}
\begin{figure}[!t]
	\centering
	{\includegraphics[width=9.5 cm,height=6.5 cm]{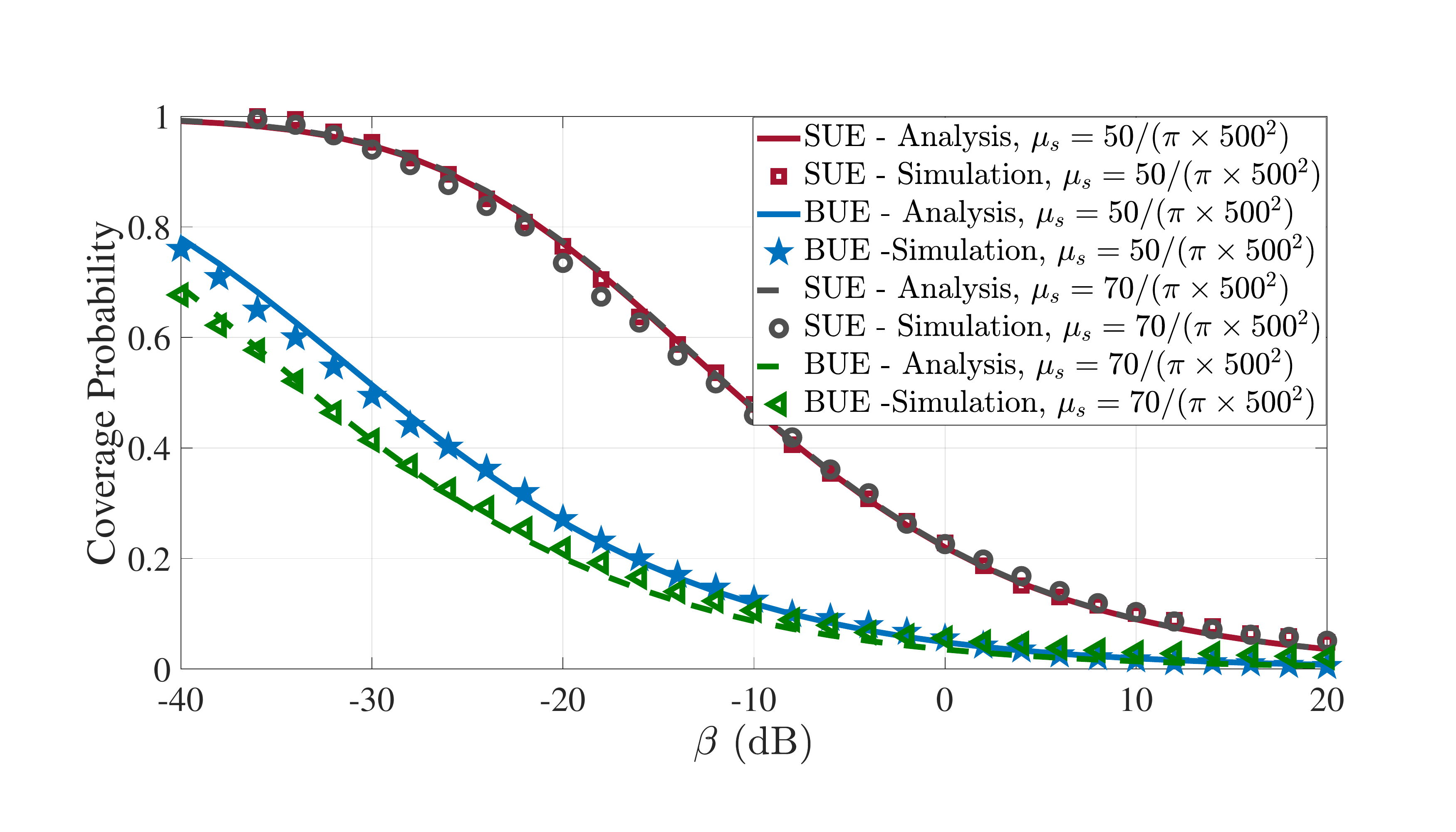}
		\caption{Coverage probability of seller and buyer MNOs versus SINR threshold ($\beta$) for different values of  $\mu_{s}\;(s\in\mathcal{S})$.\label{demo4}}}
\end{figure}
\begin{figure}[!t]
	\centering
	{\includegraphics[width=9.5 cm,height=6.5 cm]{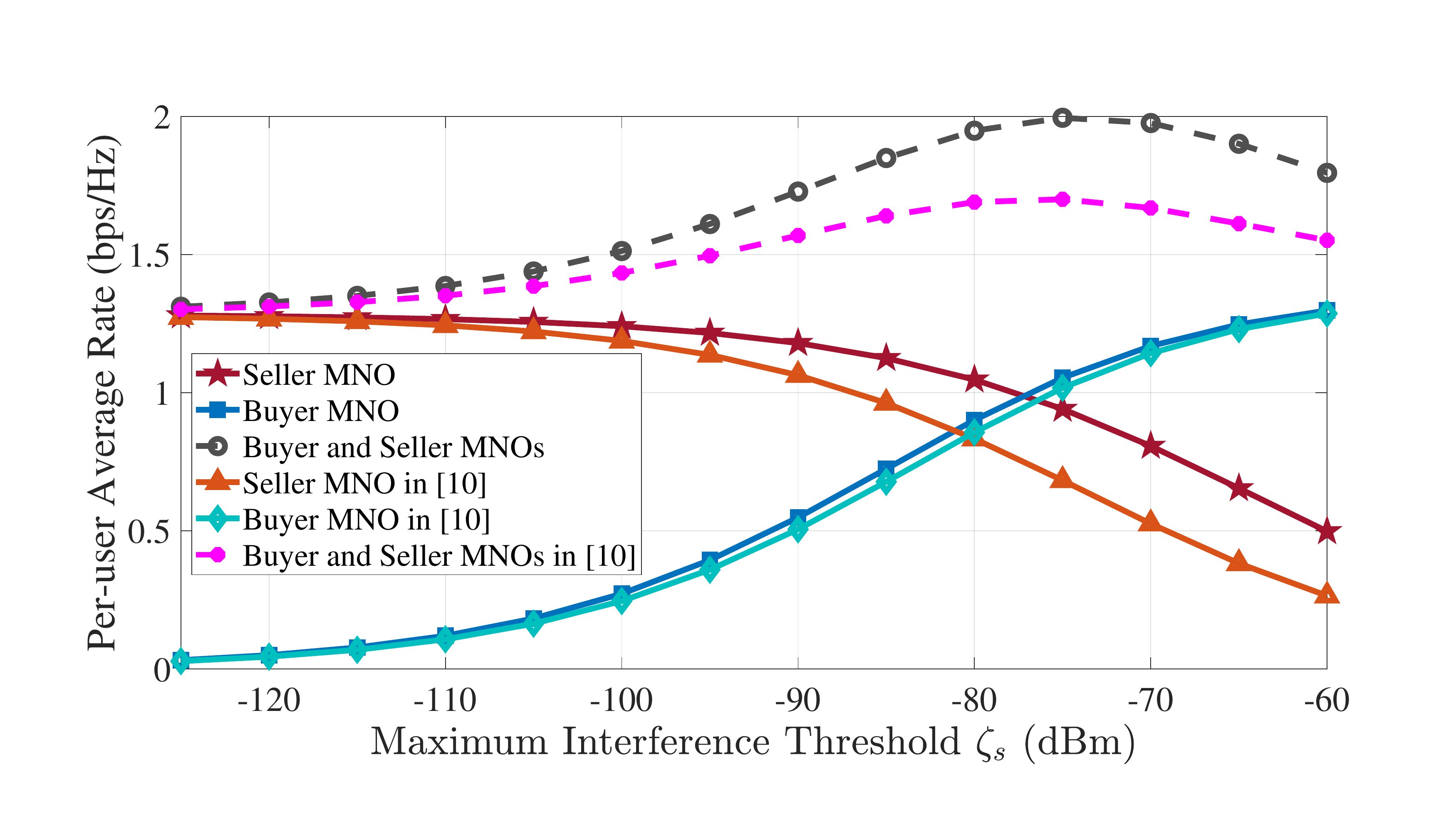}
		\caption{The per-user average rate of the seller and buyer MNOs, as well as the total sum-rate versus maximum interference threshold ($\zeta_{s}$).\label{demo5}}}
\end{figure}
The per-user average rate of both the seller and buyer MNOs, as well as their total sum-rate versus the maximum interference threshold ($\zeta_{s}$) is observed in Fig. \ref{demo5}. From this figure, we realize that increasing the maximum interference threshold ($\zeta_{s}$) leads to a decrease in the per-user average rate for the seller MNO and an increase in the per-user average rate of the buyer MNO. The reason is that when the maximum interference threshold increases, the transmit power of buyer MNO's BSs ($P_{l_{s}}^{\text{B}}$) based on \eqref{Power} can increase that causes to a higher per-user average rate. Since the transmit power of the seller's BSs is constant, when the maximum interference threshold ($\zeta_{s}$) increases, the seller MNO's per-user average rate decreases according to Theorem \ref{thm_4}. For a particular network setting, increasing the maximum interference threshold ($\zeta_{s}$) contributes to an increasing total sum-rate; nonetheless, after some points of $\zeta_{s}$, the total sum-rate decreases because the interference on SUEs is intensified. Another interesting observation is that there is a measurable value for the maximum interference threshold ($\zeta_{s}$) that maximizes the total sum-rate of MNOs. Besides, we compare our adopted power control strategy provided in subsection \ref{Power_Strategy} with the power control strategy introduced in \cite{Gupta2016}. In \cite{Gupta2016}, a system consisting of one seller and one buyer MNO is considered where the buyer MNO adjusts the transmit power of its BSs according to the distance of the nearest SUE to each of its BSs. From Fig. \ref{demo5}, we see that our adopted power control strategy outperforms the another strategy proposed in \cite{Gupta2016}. In fact, our adopted power control strategy adjusts the transmit power of buyer MNO's BSs in such a way that the interference imposed on all corresponding SUEs is below $\zeta_{s}$. Thus, the per-user average rate of seller MNO along with the total sum-rate improves significantly.
\section{Conclusion}\label{Con}
In this paper, we modeled a non-orthogonal spectrum sharing among multiple seller and multiple buyer MNOs employing stochastic geometry, where each seller MNO may lease multiple licensed sub-bands to buyer MNOs. Moreover, each buyer MNO could lease multiple licensed sub-bands from different seller MNOs. We assumed each MNO owns its BSs and UEs distributed based on two independent homogeneous PPPs. To reduce the interference on one seller MNO's UEs on one licensed sub-band, buyer MNOs had a limitation on the maximum interference imposed on the corresponding SUEs. After that, we analyzed the downlink coverage probability and the average rate of both seller and buyer networks. Finally, we validated our analysis with simulation and illustrated the performance of our adopted power control strategy via numerical results.

\end{document}